\begin{document}
\title{Throughput Optimal Multi-user Scheduling via Hierarchical Modulation}

\author{Mehmet~Karaca and
        Ozgur~Ercetin
        \vspace{-5 mm}
\thanks{Mehmet Karaca and, Ozgur Ercetin are with the Faculty of Engineering
and Natural Sciences, Sabanci University, 34956 Orhanli-Tuzla,
Istanbul/Turkey. Email: mehmetkrc@su.sabanciuniv.edu,
oercetin@sabanciuniv.edu.}\thanks{This work is supported in part by
Tubitak under grant number 109E242.} }
\newcommand{\argmax}{\operatornamewithlimits{argmax}}

\newtheorem{theorem}{Theorem}
\newtheorem{lemma}[theorem]{Lemma}
\newtheorem{definition}{Definition}
\IEEEoverridecommandlockouts
\maketitle
\begin{abstract}
We investigate the network stability problem when two users are
scheduled simultaneously. The key idea is to simultaneously transmit
to more than one users experiencing different channel conditions  by
employing hierarchical modulation. For two-user scheduling problem,
we develop a throughput-optimal algorithm which can stabilize the
network whenever this is possible. In addition, we analytically
prove that the proposed algorithm achieves larger achievable rate
region compared to the conventional Max-Weight algorithm which
employs uniform modulation and transmits a single user. We
demonstrate the efficacy of the algorithm on a realistic simulation
environment using the parameters of High Data Rate protocol in a
Code Division Multiple Access system. Simulation results show that
with the proposed algorithm, the network can carry higher user
traffic with lower delays.
\end{abstract}
\begin{IEEEkeywords}
Max-Weight scheduling, hierarchical modulation, power allocation,
queue stability, stochastic control.
\end{IEEEkeywords}
\section{Introduction}
\IEEEPARstart{T}{he} scheduling is an essential problem for any
shared resource. The problem becomes more challenging in a dynamic
setting such as wireless networks where the channel capacity is time
varying due to multiple superimposed random effects such as mobility
and multipath fading. In a queueing system, the most important
property of a scheduling algorithm is to keep the network stable
(e.g., the queue sizes remain bounded over time). The seminal work
by Tassiulas and Ephremides has shown that \textit{Max-Weight }
algorithm  scheduling the user with the highest queue backlog and
transmission rate product at every time slot can stabilize the
network whenever this is possible~\cite{MW}.

It is well-known that Max-Weight algorithm is
\textit{throughput-optimal}, i.e., it stabilizes the network for all
arrival rate vectors that are \textit{strictly} within the
achievable rate region. The performance of Max-Weight algorithm has
been investigated in depth for the single user scheduling case.
However, determining a throughput-optimal algorithm when more than
one users are scheduled simultaneously has not received much
attention.  
In this paper, we propose a modulation-assisted throughput optimal
scheduling algorithm scheme where we employ hierarchical modulation
(HM). 

The authors in~\cite{Cover:broadcast72} are the first to show the
advantage of HM in broadcast systems. In~\cite{Alouini:HM10}, the
authors proposed a multi-user scheduling algorithm and showed that
HM offers lower queueing delay at the transmission buffer. However,
in neither of these works, the authors considered the stability of
the network. Network utility maximization problem with HM was
investigated in~\cite{Karaca}. Meanwhile,~\cite{Neely:energy06}
proposed scheduling and flow control algorithms but did not take
into account the effect of modulation on the performance of the
algorithm.

Our contribution can summarized  as follows: $i)$ we propose a
throughput optimal algorithm, namely Max-Weight with Hierarchical
Modulation (MWHM) when two users are scheduled simultaneously;$ii)$
we give the conditions under HM that should be employed by
considering both analytical and implementation issues. $iii)$ we
prove that the proposed algorithm achieves larger rate region
compared to the conventional Max-Weight algorithm; $iv)$ we develop
a lower complexity version of MWHM algorithm; $v)$ we demonstrate
via realistic simulations that our algorithm not only keeps the
network stable with higher arrival rate but also reduces the average
delay.

\vspace{-5 mm}
\section{System Model}
We consider a cellular system with a single base station (BS)
transmitting to $N$ users. Let $\mathcal{N}$ denote the set of users
in the cell. Time is slotted, $t\in \{0,1,2,\ldots\}$.  Let $T_s$
denote the length of the time slot in seconds. Let $h_n(t)$
represent channel gain of user $n$ at time $t$, $n\in
\{1,2,\dots,N\}$. The gain of the channel is constant over the
duration of a time slot but varies between slots.

HM is one of the techniques for multiplexing and modulating multiple
data streams into one single symbol stream, where those multiple
symbols are superimposed together before transmission. In this
paper, for the sake of ease of exposition we assume that only two
layers of hierarchical modulation is used to serve two users
simultaneously. Specifically, we assume that QPSK/16-QAM HM is
implemented. Let  BS transmit to  two users, i.e., user $n$ and user
$m$, at time slot $t$ by employing two layers of HM. Assume that
user $j$ has a better channel than user $i$ , i.e., $h_n\leq h_m$.
Then, user $n$ is assigned to QPSK constellation which we refer to
as \textit{base layer}. User $m$ is assigned to 16-QAM constellation
which we refer to as \textit{incremental layer}. More information
about hierarchical modulation can be found in~\cite{Alouini:HM10}
and references there in. Since two modulated signals are mixed
before being transmitted, they interfere with each other at the
receiver side. However, in~\cite{Liu08}, the authors propose a
decoding technique to cancel the interference seen at the
incremental layer. Specifically, when mixed signal reaches to the
receivers, the data at the base layer is first decoded and removed.
Hence, the data at the incremental layer does not suffer from  the
transmission at the
base layer. 
In~\cite{Liu08}, the achievable rates for user $n$ and user $m$ are
given respectively as follows:
\begin{align}
\mu_{n}^{b}(t)&=T_s\times BW \times\log\left(1 + \frac{h_n(t)P_{n,b}(t)}{h_n(t)P_{m,i}(t) +\sigma}\right),\label{eq:rate1}\\
\mu_{m}^{i}(t)&=T_s\times
BW\times\log\left(1+\frac{h_m(t)P_{m,i}(t)}{\sigma}\right),\label{eq:rate2}
\end{align}
where $P_{n,b}(t)$ and $P_{m,i}(t)$ are the transmission powers for
user $n$ and $m$ at the base and incremental layers, respectively.
$\sigma$ is the noise power and $BW$ is the bandwidth of the
channel. We assume that the BS transmits at fixed power and the
total power consumption is equal to $P$, i.e., $P_{n,b}(t) +
P_{m,i}(t)= P \quad \forall t$. $\mu_{n}^{k}(t)$ is upper-bounded
such that $\mu_{n}^{k}(t) \leq \mu_{max} \ \forall t$, $k \in
\{b,i\}$. Note that when UM is applied at the physical layer, full
power is assigned to a single user and the amount of data that can
be transmitted in that case is given by,
\begin{align}
\mu_n^{um}(t)&=T_s\times BW \times\log\left(1 +
\frac{h_n(t)P}{\sigma}\right).\label{eq:rate_UM}
\end{align}
where $n \in \{1,2,\ldots,N\}$. Let $a_n(t)$ be the amount of data
(bits or packets) arriving into the queue of user $n$ at time slot
$t$ and $a_n(t)\leq a_{max} \ \forall t$, and assume that $a_n(t)$
is a time and user independent stationary process. We denote the
arrival rate vector as $\boldsymbol
\lambda=(\lambda_1,\lambda_2,\cdots,\lambda_N)$, where $\lambda_n =
{\mathbb E}[a_n(t)]$. Let $
\textbf{q}(t)=(q_1(t),q_2(t),\cdots,q_N(t))$ denote the vector of
queue sizes, where $q_n(t)$ is  the queue length of user $n$ at time
slot $t$. The dynamics of the queue of user $n$ is given as,
\begin{align}
q_n(t+1)=[q_n(t)+a_n(t)-\mu_{n}^{k}(t)]^+ .
\end{align}
where $(x)^+=\max(x,0)$ and $k \in \{b,i\}$. Let $\Lambda$ denote
the \textit{achievable rate region} (or rate region) defined as the
closure of the set of all arrival rate vectors for which there
exists an appropriate scheduling policy stabilizing the network.
\section{Throughput Optimal scheduling }
In this section, we give a throughput optimal scheduling algorithm
when HM is employed. However, we start with the conventional
Max-Weight algorithm  employing UM to schedule a single user at
every time slot.
\\
\textbf{ Max-Weight with UM (MWUM)}: At time $t$, given $h_n(t)$ and
$q_n(t)$ for all $n \in \mathcal{N}$,  schedule user $n^*$ which has
the maximum queue length and service rate product, i.e.,~\cite{MW}:
\begin{align}
w_n^{um}(t)&= q_n(t)\mu_n^{um}(t)\label{eq:weight_um}\\
n^*&=\argmax_{n \in \mathcal{N}} \
w^{um}_n(t)\label{eq:scheduling_um}.
\end{align}
We define $W_u(t)\triangleq w^{um}_{n^*}(t)$. Let $\Lambda_{u}$
denote the rate region achieved by MWUM.

\textbf{Max-Weight with HM (MWHM)}: At time $t$, given $h_n(t)$ and
$q_n(t)$ for all $n \in \mathcal{N}$ schedule \emph{two} users
$(n^*,m^*)$ such that $h_{n^*}(t) \leq h_{m^*}(t)$ to maximize the
sum of queue length and service rate products, i.e.,:
\begin{align}
w_{n,m}^{hm}(t)&= q_n(t)\mu_{n}^{b}(t) +
q_m(t)\mu_{m}^{i}(t)\label{eq:weight_hm}\\
(n^*,m^*)&=\argmax_{\substack{(n,m) \in \mathcal{N},n \neq m\\
h_n(t) \leq h_m(t)}} \ w^{hm}_{n,m}(t)\label{eq:scheduling_hm}.
\end{align}
We define $W_h(t)\triangleq w^{hm}_{n^*,m^*}(t)$. Let $\Lambda_{h}$
denote the rate region achieved by MWHM.

Since BS has limited power budget, power allocation must be
performed to determine $\mu_{n}^{b}(t)$ and $\mu_{m}^{i}(t)$.

\textit{Power Allocation with HM}:
 Recall that Max-Weight type
scheduling algorithms aim to maximize the weight $w_{n,m}^{hm}(t)$
(or $w_n^{um}(t)$) at each time slot. It is easy to determine the
maximum weight achieved by UM, $w_{n^*}^{um}(t)$. However, the
maximum weight under HM depends on power allocations $P_{n,b}^*(t)$
and $P_{m,i}^*(t)$. Without loss of generality, for a given pair of
users, e.g., user $n$ and user $m$ such that $h_n(t) \leq h_m(t)$,
the optimal power allocation maximizing the weight $w_{n,m}^{hm}(t)$
is obtained by solving the following optimization problem:
\begin{align}
\max_{P_{n,b}(t),P_{m,i}(t)}\
&q_n(t)\mu_{n}^{b}(t) + q_m(t)\mu_{m}^{i}(t)\label{eq:objective_hm}\\
&\text{s.t.} \quad  P_{n,b}(t) + P_{m,i}(t) = P
\label{eq:constraint}
\end{align}
Note  that $P_{n,b}^*(t)$ and $P_{m,i}^*(t)$ both have a non-zero
value when \eqref{eq:objective_hm}-\eqref{eq:constraint} is a convex
problem. Now, we give a Lemma which states the necessary conditions
for this to hold. For notational convenience, we drop the time
index. Let us define,
\begin{align*}
A \triangleq \frac{ h_n^2}{( h_nP_{m,i}+ \sigma)^2} \ \text{and} \ B
\triangleq \frac{h_m^2}{( h_mP_{m,i}+ \sigma)^2},
\end{align*}
where $0 \leq P_{m,i}(t) \leq P$.
\begin{lemma}
\label{lemma:2} The problem
\eqref{eq:objective_hm}-\eqref{eq:constraint} is a convex
optimization problem when the following inequality is satisfied
\begin{align}
q_nA \leq q_mB
\end{align}
\end{lemma}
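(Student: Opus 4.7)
The plan is to reduce the two-variable problem to a single-variable concave maximization by substituting the power constraint, then verify concavity via the second derivative test.

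First I would use the equality constraint to eliminate $P_{n,b}$, writing $P_{n,b} = P - P_{m,i}$. Substituting into $\mu_n^b$ and simplifying the argument of the logarithm yields
\begin{equation*}
\mu_n^b = T_s \cdot BW \cdot \bigl[\log(h_n P + \sigma) - \log(h_n P_{m,i} + \sigma)\bigr],
\end{equation*}
so the objective becomes a function of the single variable $P_{m,i} \in [0,P]$:
\begin{equation*}
f(P_{m,i}) = q_n T_s BW\bigl[\log(h_n P + \sigma) - \log(h_n P_{m,i} + \sigma)\bigr] + q_m T_s BW \log\!\Bigl(1 + \tfrac{h_m P_{m,i}}{\sigma}\Bigr).
\end{equation*}
The feasible set $[0,P]$ is convex, so the problem is a convex optimization problem (in the concave-maximization sense) precisely when $f$ is concave on this interval.

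Next I would compute $f'$ and $f''$. Differentiating twice gives
\begin{equation*}
f''(P_{m,i}) = T_s BW \left[ q_n \cdot \frac{h_n^2}{(h_n P_{m,i} + \sigma)^2} - q_m \cdot \frac{h_m^2}{(h_m P_{m,i} + \sigma)^2}\right] = T_s BW\,(q_n A - q_m B).
\end{equation*}
Concavity of $f$ on $[0,P]$ is equivalent to $f''(P_{m,i}) \leq 0$ for every feasible $P_{m,i}$, which is exactly the stated condition $q_n A \leq q_m B$.

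I do not foresee any serious obstacle; the main thing to be careful about is the algebraic simplification of the first logarithm after the substitution (which makes the first term clean and removes the coupling between $P_{n,b}$ and $P_{m,i}$), and noting that the condition $q_n A \leq q_m B$ must be read as holding for all admissible $P_{m,i}$, since $A$ and $B$ depend on $P_{m,i}$. The interpretation of the lemma is therefore that whenever this pointwise inequality holds on the feasible interval, the second derivative is nonpositive and the maximization becomes a convex program.
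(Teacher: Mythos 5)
Your proof is correct and follows essentially the same route as the paper: substitute $P_{n,b}=P-P_{m,i}$, reduce to a single-variable objective, and check that the second derivative $q_nA-q_mB$ is nonpositive, which is exactly the stated condition. Your explicit remark that $A$ and $B$ depend on $P_{m,i}$ and the inequality must hold over the whole feasible interval is a slightly more careful reading than the paper's terse statement, but the argument is the same.
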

\begin{proof}
We show that the objective function in (\ref{eq:objective_hm}) is
concave under the given condition. The objective function can be
rewritten by noting that $P_{n,b}=P-P_{m,i}$. Since the parameters
$T_s$ and $BW$ do not effect the concavity,  we have the following
objective function,
\begin{align*}
f= q_n\log\left(1 + \frac{h_n(P-P_{m,i})}{h_nP_{m,i} +\sigma}\right)
+ q_m\log\left(1+\frac{h_mP_{m,i}}{\sigma}\right).
\end{align*}
Taking the second derivative of $f$ with respect to $P_{m,i}$
yields,
\begin{align}
\frac{d^2 f}{d P_{m,i}^2}=  q_nA -q_mB.
\end{align}
For concavity, $\frac{d^2 f}{d P_{m,i}^2}$ must be less than or
equal zero, i.e., $\frac{d^2 f}{d P_{m,i}^2} \leq 0$. Thus, $q_nA
\leq q_mB$.
\end{proof}
As long as the condition in Lemma 2 is satisfied, the optimal power
allocation can be found by taking the first derivative of $f$ and
setting it to zero. The first derivative of $f$ with respect to
$P_{m,i}$ is given by,
\begin{align}
\frac{d f}{d P_{m,i}}=-q_n\sqrt A +q_m\sqrt B=0.
\end{align}
Thus, we have,
\begin{align}
P_{m,i}^*&=\frac{\sigma(q_mh_m-q_nh_n)}{h_nh_m(q_n-q_m)},\label{eq:optalloc1}\\
P_{n,b}^*&=P-P_{m,i}^*\label{eq:optalloc2}.
\end{align}

\begin{lemma}
\label{lemma:1} If $\boldsymbol \lambda \in \Lambda_{h}$ (i.e.,
$\boldsymbol \lambda$ is feasible), then MWHM algorithm stabilizes
the network and it is throughput optimal.
\end{lemma}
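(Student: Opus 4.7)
The plan is to use a standard quadratic Lyapunov drift argument adapted to the HM setting. Let $L(\mathbf{q}(t)) = \tfrac{1}{2}\sum_{n \in \mathcal{N}} q_n^2(t)$ and consider the one-slot conditional drift $\Delta(\mathbf{q}(t)) = \mathbb{E}\!\left[L(\mathbf{q}(t+1)) - L(\mathbf{q}(t)) \mid \mathbf{q}(t)\right]$. Using the queue recursion together with $([x]^+)^2 \leq x^2$, the uniform bounds $a_n(t) \leq a_{\max}$ and $\mu_n^k(t) \leq \mu_{\max}$, and the stationarity of the arrivals, I would first derive the familiar inequality
\begin{equation*}
\Delta(\mathbf{q}(t)) \leq B \;-\; \sum_{n \in \mathcal{N}} q_n(t)\bigl(\mathbb{E}[\mu_n(t) \mid \mathbf{q}(t)] - \lambda_n\bigr),
\end{equation*}
where $B$ is a finite constant depending only on $N$, $a_{\max}$, and $\mu_{\max}$, and $\mu_n(t)$ denotes the service rate delivered to user $n$ by MWHM (equal to $\mu_n^b(t)$ or $\mu_n^i(t)$ when $n$ is one of the two scheduled users, and zero otherwise).

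Next I would invoke the max-weight property. By construction, \eqref{eq:scheduling_hm} together with the optimal power split \eqref{eq:optalloc1}--\eqref{eq:optalloc2} selects the pair $(n^*,m^*)$ and the split $P_{m,i}^*$ that jointly maximize $q_n\mu_n^b + q_m\mu_m^i$ over all admissible two-user choices and all power allocations satisfying $P_{n,b}+P_{m,i}=P$. Hence for any competing stationary scheduling and power-allocation rule $\pi$ that depends only on the channel state $\mathbf{h}(t)$, the sample-path inequality
\begin{equation*}
\sum_{n \in \mathcal{N}} q_n(t)\,\mu_n^{\mathrm{MWHM}}(t) \;\geq\; \sum_{n \in \mathcal{N}} q_n(t)\,\mu_n^{\pi}(t)
\end{equation*}
holds. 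Assuming $\boldsymbol{\lambda}$ lies strictly inside $\Lambda_h$ (so that there is slack $\epsilon > 0$), the definition of $\Lambda_h$ as the closure of stabilizable arrival rates guarantees a stationary randomized benchmark policy $\pi^*$ whose average service rates satisfy $\mathbb{E}[\mu_n^{\pi^*}(t)] \geq \lambda_n + \epsilon$ for every $n$. Substituting $\pi^*$ into the drift bound yields
\begin{equation*}
\Delta(\mathbf{q}(t)) \;\leq\; B \;-\; \epsilon \sum_{n \in \mathcal{N}} q_n(t),
\end{equation*}
and the Foster--Lyapunov theorem then implies strong stability of $\mathbf{q}(t)$, establishing throughput optimality.

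The step I expect to be the main obstacle is justifying the benchmark policy $\pi^*$ and the resulting max-weight dominance. In the classical Tassiulas--Ephremides setting the feasible instantaneous-rate set is a fixed polytope indexed by a finite action space, but here the admissible rate vectors depend jointly on the discrete user pair $(n,m)$ and the continuous power split $P_{m,i}\in[0,P]$, and the HM rate functions \eqref{eq:rate1}--\eqref{eq:rate2} are not jointly concave on all of this domain (only under the regime characterized by Lemma~\ref{lemma:2}). Care is therefore needed to define $\Lambda_h$ as the convex hull of the per-state instantaneous rate sets and to invoke a Caratheodory-type argument to realize any interior point via a stationary randomized $\pi^*$; once this is in place the sample-path max-weight inequality above, and hence the drift argument, go through essentially verbatim as in the single-user case.
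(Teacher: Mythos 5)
Your proposal follows essentially the same route as the paper: a quadratic Lyapunov function, the standard drift bound via $([x]^+)^2 \le x^2$, the observation that MWHM maximizes $\sum_n q_n(t)\mu_n(t)$ over admissible two-user/power choices, and Foster--Lyapunov to conclude stability. In fact your write-up is somewhat more complete than the paper's, which skips the explicit stationary randomized benchmark policy $\pi^*$ (and the $\epsilon$-slack comparison) and simply appeals to the Lyapunov--Foster criteria of the cited reference, so the concern you raise about rigorously realizing interior points of $\Lambda_h$ is a gap the paper leaves to that reference rather than a flaw in your argument.
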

\begin{proof}
The proof is given in Appendix \ref{sec:lemma1}.
\end{proof}
\section{Max-Weight Algorithm with Dynamic Modulation (MWDM)} Note
that MWHM can only be used when there is an inner point solution to
the problem \eqref{eq:objective_hm}-\eqref{eq:constraint}, i.e., $0
<P_{m,i}^*<P $. If the solution is on the boundary, i.e.,
$P_{m,i}^*=0$ or $P_{m,i}^*=P$, then full power is assigned to a
single user and HM is no longer employed, i.e., transmission to a
single user is optimal.

Now, we propose Max-Weight algorithm with dynamic modulation (MWDM)
that dynamically decides which modulation (HM or UM) must be
employed at every time slot. Let $W_d(t)$ and $\Lambda_d$ be the
maximum weight at time $t$ and the rate region achieved  by MWDM,
respectively. MWDM is implemented as follows:
\begin{itemize}
\item  Step 1: The scheduler applies MWUM and finds the maximum
weight $W_u(t)$ by using \eqref{eq:weight_um} and
\eqref{eq:scheduling_um}.
\item Step 2:  The scheduler applies MWHM as follows:
For every pair of users $(n,m)$, find $w_{n,m}^{hm}(t)$:
\begin{itemize}
\item  if $h_n \leq h_m$, then
user $n$ is embedded at the base layer whereas user $m$ is the
incremental layer or vice versa.
\item Check whether the condition in Lemma 2 is satisfied.

\item If not, $w_{n,m}^{hm}(t)=\max \{
w_{n}^{um},w_{m}^{um}\}$ and UM is employed.

\item Otherwise, determine the optimal power allocation
$P_{m,i}^*$ and $P_{n,b}^*$ according to \eqref{eq:optalloc1} and
\eqref{eq:optalloc2}, respectively.

\item Then determine $w_{n,m}^{hm}(t)$ according to
\eqref{eq:weight_hm}.
\end{itemize}
\item  Step 3: After finding $w_{n,m}^{hm}$ for all pairs, determine
$W_h(t)$.
\item Step 4:  If $W_h(t) > W_u(t)$, then $W_d(t)=W_h(t)$ and HM is
employed. Otherwise, $W_d(t)=W_u(t)$ and UM is employed.
\end{itemize}

Let us define the expected weights achieved by MWDM and MWUM as
${\mathbb E}[W_d(t)]$ and ${\mathbb E}[W_u(t)]$, respectively.
\begin{theorem}
\label{thm:1} The achievable rate region of MWUM algorithm is a
subset of the achievable rate region of MWDM, i.e., $\Lambda_{u}
\subseteq \Lambda_{d}$.
\end{theorem}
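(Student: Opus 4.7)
The plan is to establish $\Lambda_u \subseteq \Lambda_d$ by a pathwise Lyapunov drift comparison between MWUM and MWDM. The central observation, which is essentially baked into Step 4 of the MWDM definition, is that
\[
W_d(t) \;=\; \max\{W_u(t),\,W_h(t)\} \;\geq\; W_u(t)
\]
holds for every sample path, every channel realization, and every queue state, so whatever expected service weight MWUM delivers, MWDM delivers at least as much.

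First I would fix an arbitrary arrival rate vector $\boldsymbol\lambda \in \Lambda_u$ and work with the quadratic Lyapunov function $L(\mathbf{q}(t)) = \sum_{n\in\mathcal{N}} q_n^2(t)$. Squaring the queue evolution, using the uniform bounds $a_n(t)\leq a_{\max}$ and $\mu_n^k(t)\leq \mu_{\max}$, and taking conditional expectations yields the standard one-step drift bound
\[
\mathbb{E}[L(\mathbf{q}(t+1)) - L(\mathbf{q}(t)) \mid \mathbf{q}(t)] \leq C + 2\sum_n q_n(t)\lambda_n - 2\,\mathbb{E}[W_{\mathrm{alg}}(t)\mid\mathbf{q}(t)],
\]
where $C$ depends only on $N$, $a_{\max}$, and $\mu_{\max}$, and $W_{\mathrm{alg}}(t)=\sum_n q_n(t)\mu_n^{\mathrm{serv}}(t)$ coincides with $W_u(t)$ under MWUM and with $W_d(t)$ under MWDM (only the scheduled users receive nonzero service, and when MWDM selects HM the weight $q_n\mu_n^b + q_m\mu_m^i$ is exactly $W_h(t)$).

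Combining the pathwise bound $W_d(t)\geq W_u(t)$ with this drift inequality makes the MWDM drift upper bound pointwise dominated by the MWUM one. Since $\boldsymbol\lambda \in \Lambda_u$ means MWUM stabilizes the network, the classical Tassiulas--Ephremides argument supplies an $\epsilon>0$ and a finite $K$ such that the MWUM drift bound is at most $-\epsilon\sum_n q_n(t)$ whenever $\sum_n q_n(t)>K$; substituting into the MWDM bound gives the same negative drift for MWDM outside a compact set, and Foster--Lyapunov then concludes that MWDM stabilizes the queues, i.e.\ $\boldsymbol\lambda \in \Lambda_d$.

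The main obstacle is not a computation but the precise statement of the drift characterization of $\Lambda_u$: one must argue that $\boldsymbol\lambda\in\Lambda_u$ is equivalent to the existence of such $\epsilon$ and $K$ giving strictly negative MWUM drift outside a bounded set. This is exactly the throughput-optimality assertion for MWUM and follows by the same Lyapunov template used in the proof of Lemma 1 for MWHM. Once that equivalence is in hand, the substitution $W_d(t)\geq W_u(t)$ is immediate, and the argument never needs to revisit the convexity hypothesis of Lemma 2, the optimal power split $P_{m,i}^*$, or to directly compare the geometries of $\Lambda_u$ and $\Lambda_h$.
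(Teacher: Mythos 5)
Your proof is correct in substance but takes a genuinely different, more self-contained route than the paper. The paper's entire proof consists of your opening observation --- Step 4 of MWDM forces $W_d(t)=\max\{W_u(t),W_h(t)\}\geq W_u(t)$ on every sample path, hence ${\mathbb E}[W_d(t)]\geq {\mathbb E}[W_u(t)]$ --- and then delegates the rest to a cited comparison theorem of Eryilmaz et al., which asserts that a scheduler whose expected max-weight dominates another's achieves at least the other's rate region. You instead unpack that black box: you run the quadratic Lyapunov drift bound for MWDM, substitute $W_d(t)\geq W_u(t)$, and invoke the negative-drift property of MWUM for $\boldsymbol\lambda\in\Lambda_u$ to conclude stability of MWDM via Foster--Lyapunov. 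Your route buys independence from the external reference and explicitly reuses the Lemma~1 machinery; the paper's route buys brevity. One step you should tighten: the negative MWUM drift outside a compact set does not follow from ``MWUM stabilizes the network'' per se (stability of a chain does not by itself force a prescribed Lyapunov function to have negative drift), so your claimed ``equivalence'' is not quite the right statement. The correct source of the $\epsilon$ is the standard characterization of the interior of $\Lambda_u$: for $\boldsymbol\lambda$ strictly inside $\Lambda_u$ there exists a stationary randomized UM policy whose expected service rates dominate $\lambda_n+\epsilon$ for every $n$, and since MWUM maximizes $\sum_n q_n(t)\mu_n^{um}(t)$ over all UM choices, ${\mathbb E}[W_u(t)\mid \mathbf{q}(t)]\geq \sum_n q_n(t)(\lambda_n+\epsilon)$, which is the bound you then transfer to MWDM via $W_d(t)\geq W_u(t)$; boundary points are absorbed by the closure in the definition of the rate region. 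With that repair the argument goes through exactly as you outline, and, like the paper's proof, it never needs Lemma~2 or the explicit power allocation.
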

\begin{proof}
We use the theorem given in~\cite{Eryilmaz:ToN05} to prove the
lemma. According to the theorem, if ${\mathbb E}[W_d(t) \geq
{\mathbb E}[W_u(t)]$, then MWDM can  at least achieve the rate
region of MWUM. The average weight achieved by MWDM is always
greater than or equal to the weight achieved by MWUM since MWDM can
either apply HM or UM according the maximum weight. Thus, the
following inequality holds at every time slot,
\begin{align}
W_d(t) \geq W_u(t)\label{weights}
\end{align}
Taking the expectation of both sides of \eqref{weights} yields that
${\mathbb E}[W_d(t) ] \geq {\mathbb E}[W_u(t)]$.
\end{proof}
Hence, MWDM can be used to increase the total network throughput.
\vspace{-2 mm}
\subsection{ A low complexity algorithm}
Note that the implementation of MWDM algorithm requires the
calculation of the optimal power allocation and the weight of every
pair of users. This requires a computational complexity of $O(N^2)$.
Now, we propose a low complexity algorithm (L-MWDM) that has a
computational complexity of $O(N)$.

L-MWDM algorithm is implemented as follows:
\begin{itemize}
\item Step 1: Determine user $n^*$
according to \eqref{eq:weight_um} and \eqref{eq:scheduling_um} which
is the optimal user selected under UM. Determine $W_u(t)$ by using
\eqref{eq:weight_um}.

\item Step 2: For every user $m\neq n^*$, $m \in \mathcal{N}$ do:
\begin{itemize}
\item if $h_{n^*} \leq h_{m}$, then user
$n^*$ is embedded at the base layer whereas user $m$ is embedded at
the incremental layer or vice versa.
\item Check the condition in Lemma 2 is satisfied.


\item If it is not, $w_{n^*,m}^{hm}(t)=\max \{
w_{n^*}^{um},w_{m}^{um}\}$ and UM is employed.

\item Otherwise,
determine the optimal power allocation $P_{n^*,b}^*$ and $P_{m,i}^*$
according to \eqref{eq:optalloc1} and \eqref{eq:optalloc2},
respectively. Then, determine $w_{n^*,m}^{hm}(t)$ according to
\eqref{eq:weight_hm}.

\end{itemize}
\item Step 3: After finding $w_{n^*,m}^{hm}(t)$ for every user  $m\neq
n^*$ determine $W_h(t)$.

\item Step 4: If $W_h(t) > W_u(t)$, then $W_d(t)=W_h(t)$ and HM is
employed. Otherwise, $W_d(t)=W_u(t)$ and UM is employed.
\end{itemize}
Note that the difference between MWDM and L-MWDM is that MWDM checks
the weights achieved by every pair of users. Hence, its complexity
increases quadratically with the number of users. On the other hand,
L-MWDM calculates the weights assuming that user $n^*$ is always
scheduled. Thus, its complexity is linear with $N$. However, the
maximum weight obtained with MWDM is always greater or equal to that
of L-MWDM.

\vspace{-2 mm}
\section{Simulation Results}
In our simulations, we model a single cell downlink transmission
utilizing high data rate (HDR) \cite{HDR1}.  The base station serves
20 users and keeps a separate queue for each user. Time is slotted
with length $T_s=1.67$ ms as defined in HDR specifications. We set
$BW=1.25$ MHz, $P=10$ Watts and $\sigma=10^{-6}$ Watts. Packets
arrive at each slot according to Bernoulli distribution. The size of
a packet is set to 128 bytes which corresponds to the size of an HDR
packet. The wireless channel between the BS and each user is modeled
as a \textit{correlated} Rayleigh fading according to Jakes' model
with different Doppler frequencies varying randomly between 5 Hz and
15 Hz.

Figure ~\ref{fig:fig1} depicts the maximum arrival rate that can be
supported by MWDM, L-MWDM and MWUM. Clearly, as the overall arrival
rate exceeds 30 packets/slot queue sizes suddenly increase with MWUM
and the network becomes unstable. However, MWDM and L-MWDM improves
over MWUM by supporting the overall arrival rate of up to 32
packets/slot. Therefore, MWDM can achieve a larger rate region than
MWUM as verified analytically in Theorem \ref{thm:1}. Figure
~\ref{fig:fig1} also shows the sum of the queue lengths vs. mean of
overall arrival rate (packets/slot), i.e., the increase is about 1.2
Mbps. As the average arrival rate increases the average queue
backlogs increase as well in all algorithms. Following Littles' Law,
larger queue backlogs yield longer network delays. However, due to
the possibility of serving more than one queue at a time, MWDM and
L-MWDM outperform MWUM in terms of the average delay. This result
indicates that MWDM and L-MWDM are better techniques for delay
sensitive applications. In addition, the delay performance of MWDM
and L-MWDM are very close. Recall that L-MWDM always schedules user
$n^*$ which is the optimal user in MWUM algorithm. Similar to
L-MWDM, MWDM schedules user $n^*$ most of the time.
\begin{figure}[t]
    \centering
    \includegraphics[width=0.8\columnwidth]{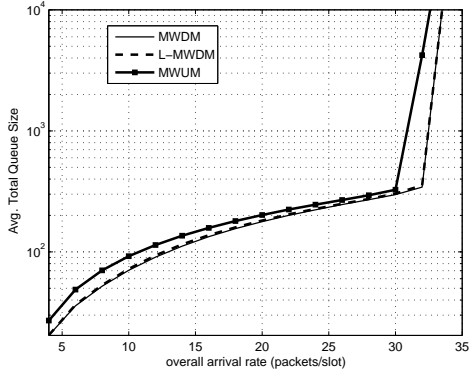}
    \caption{Average total queue sizes vs. overall mean arrival rate.}
    \label{fig:fig1}
\end{figure}
Figure ~\ref{fig:fig2} shows the sum of the queue lengths vs.
transmit power, $P$ in Watts when the overall mean arrival rate is
28 packets/slot. Clearly, as $P$ increases, the average queue size
with both MWDM and MWUM decreases as well. However, the steady state
queue length is achieved with MWUM at $P=4$  whereas MWDM reaches to
the steady state at $P=3$. As a conclusion, HM based scheduling
requires less power than the uniform constellation based scheduling
to stabilize the network.

\vspace{-3 mm}
\section{Conclusions}
In this paper, we investigate the advantages of transmissions of
more than one data streams simultaneously in a network stability
problem. We propose to use hierarchical modulation with Max-Weight
algorithm when two user are scheduled simultaneously. First, we give
the optimal power allocation among users. Then, we show that the
proposed algorithm can support higher user traffic compared to the
conventional Max-Weight Algorithm. In addition, we demonstrate that
with the proposed algorithm the average delay reduces dramatically.
HM is a good technique for scheduling problem especially when the
number of users in the system is large and BS transmits high
transmission powers.

\vspace{-3 mm}

\appendices
\section{Proof of Lemma \ref{lemma:1}}
\label{sec:lemma1} We can write the following inequality by using
the fact $([a]^+)^2 \leq (a)^2, \quad \forall a$:
\begin{align}
q^2_n(t+1) \leq &q^2_n(t) +(\mu_{max})^2+(a_{max})^2\notag\\
&- 2q_n(t)[\mu_n(t)-a_n(t)] \label{eq:queuek}
\end{align}
Define the following Lyapunov function and conditional Lyapunov
drift:
\begin{align}
&L(\textbf{q}(t)) \triangleq \sum_{n=1}^N q_n^2(t),\label{eq:lyapunov-func}\\
&\Delta(t) \triangleq {\mathbb E}\left[ L(\textbf{q}(t+1)) -
L(\textbf{q}(t)) | \textbf{q}(t)\right].\label{eq:lyapunov_drift}
\end{align}
By using \eqref{eq:queuek} and \eqref{eq:lyapunov-func}, one can
show that the Lyapunov drift of the system satisfies the following
inequality at every time slot,
\begin{equation}
\Delta(t) \leq B -
\sum_n\mathbb{E}\left\{q_n(t)\mu_n(t)|\textbf{q}(t)\right\}-\sum_n
q_n(t)\lambda_n \label{eq:Lyp_drift}
\end{equation}
where $B=\frac{N}{2} ((\mu_{max})^2+(a_{max})^2)$.
Note that the second term in the right hand side of
\eqref{eq:Lyp_drift} can be rewritten as follows when two users are
scheduled:
\begin{align*}
\sum_n\mathbb{E}\left\{q_n(t)\mu_n(t)|\textbf{q}(t)\right\}
=\sum_{\substack{(i,j)\\  h_i \leq h_j}}
\mathbb{E}\left\{q_i(t)\mu_i(t) +
q_j(t)\mu_j(t)|\textbf{q}(t)\right\}
\end{align*}
Now, it is easy to see that MWHM minimizes the right hand side of
\eqref{eq:Lyp_drift} at every time slot. Therefore, according to
Lyapunov-Foster criteria, $\textbf{q}(t)$ process is positive
recurrent Markov chain and MWHM can stabilize the network whenever
this is possible~\cite{Georgiadis:Resource06}.
\begin{figure}[t!]
    \centering
    \includegraphics[width=0.8\columnwidth]{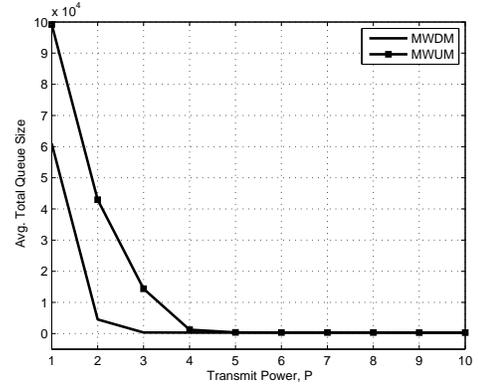}
    \caption{Average total queue sizes vs. transmit power, P}
    \label{fig:fig2}
\end{figure}


\vspace{-3 mm}
\bibliographystyle{IEEEtran}
\bibliography{IEEEabrv,ref}

\begin{thebibliography}{1}
\providecommand{\url}[1]{#1}
\csname url@samestyle\endcsname
\providecommand{\newblock}{\relax}
\providecommand{\bibinfo}[2]{#2}
\providecommand{\BIBentrySTDinterwordspacing}{\spaceskip=0pt\relax}
\providecommand{\BIBentryALTinterwordstretchfactor}{4}
\providecommand{\BIBentryALTinterwordspacing}{\spaceskip=\fontdimen2\font plus
\BIBentryALTinterwordstretchfactor\fontdimen3\font minus
  \fontdimen4\font\relax}
\providecommand{\BIBforeignlanguage}[2]{{%
\expandafter\ifx\csname l@#1\endcsname\relax
\typeout{** WARNING: IEEEtran.bst: No hyphenation pattern has been}%
\typeout{** loaded for the language `#1'. Using the pattern for}%
\typeout{** the default language instead.}%
\else
\language=\csname l@#1\endcsname
\fi
#2}}
\providecommand{\BIBdecl}{\relax}
\BIBdecl

\bibitem{MW}
L.~Tassiulas and A.~Ephremides, ``Stability properties of constrained queueing
  systems and scheduling policies for maximum throughput in multihop radio
  networks,'' \emph{{IEEE} Trans. Autom. Control}, vol.~37, no.~12, pp. 1936 --
  1948, Dec. 1992.

\bibitem{Cover:broadcast72}
T.~Cover, ``Broadcast channels,'' \emph{{IEEE} Trans. Inf. Theory}, vol.~18,
  no.~1, pp. 2--14, 1972.

\bibitem{Alouini:HM10}
M.~J. Hosssain, M.-S. Alouini, and V.~K. Bhargava, ``Channel access delay and
  buffer distribution of two-user opportunistic scheduling schemes in wireless
  networks,'' \emph{{IEEE} Trans. Commun.}, vol.~58, no.~7, pp. 2077--2087,
  2010.

\bibitem{Karaca}
M.~Karaca and O.~Ercetin, ``Optimal scheduling and resource allocation using
  hierarchical modulation in wireless networks,'' in \emph{Proceedings of the
  IEEE/IFIP Eighth International Conference on Wireless and Optical
  Communications Networks (WOCN 2011)}, May 2011.

\bibitem{Neely:energy06}
M.~J. Neely, ``Energy optimal control for time varying wireless networks,''
  \emph{{IEEE} Trans. Inf. Theory}, vol.~52, no.~7, pp. 2915 -- 2934, 2006.

\bibitem{Liu08}
B.~Liu and H.~Liu, ``Dpc-based hierarchical broadcasting: Design and
  implementation,'' \emph{{IEEE} Trans. Veh. Technol.}, vol.~57, no.~6, pp.
  3895 -- 3900, Nov. 2008.

\bibitem{Eryilmaz:ToN05}
A.~Eryilmaz, R.~Srikant, and J.~Perkins, ``Stable scheduling policies for
  fading wireless channels,'' \emph{{IEEE/ACM} Trans. Netw.}, vol.~13, pp.
  411--424, 2005.

\bibitem{HDR1}
P.~Bender, P.~Black, M.~Grob, R.~Padovani, N.~Sindhushyana, and S.~Viterbi,
  ``Cdma/hdr: a bandwidth efficient high speed wireless data service for
  nomadic users,'' \emph{{IEEE} Commun. Mag.}, vol.~38, no.~7, pp. 70--77,
  2000.

\bibitem{Georgiadis:Resource06}
L.~Georgiadis, M.~J. Neely, and L.~Tassiulas, \emph{Resource allocation and
  cross-layer control in wireless networks}.\hskip 1em plus 0.5em minus
  0.4em\relax Foundations and Trends in Networking, 2006.

\end{thebibliography}

\end{document}